\newtheorem{theorem}{Theorem}[section]
\newtheorem{lemma}[theorem]{Lemma}
\theoremstyle{definition}
\newtheorem{remark}[theorem]{Remark} 
\numberwithin{equation}{section}
\def\RE{\mathbb R}
\def\CO{\mathbb C}
\def\H{\mathscr H}
\def\D{\mathscr D}
\def\K{\mathscr K}
\def\h{\mathfrak h}
\def\Z{\mathbb Z}
\def\R{\mathscr R}
\def\p{\par\noindent}
\def\ttau{\tilde\tau}
\def\uno{\mathsf 1}
\def\n{|\!|\!|}
\def\op{\underset{k\in\Z}{\oplus}}
\def\opn{\underset{n\in\NA}{\oplus}}
\def\be{\begin{equation}}
\def\ee{\end{equation}}
\def\ci{\mathbb T}
\def\M{\mathbb M}
\def\NA{\mathbb N}
\begin{document}

\title[Direct sums of trace maps and self-adjoint extensions]
{Direct sums of trace maps and self-adjoint extensions 
}

\author{Andrea Posilicano}
\address{DiSAT - Sezione di Matematica,  Universit\`a
dell'Insubria, I-22100 Como, Italy}

\email{posilicano@uninsubria.it}

\begin{abstract} We give a simple criterion so that a countable infinite direct sum of trace (evaluation)  maps is a trace map. An application to the theory of self-adjoint extensions of direct sums of symmetric operators is provided; this gives an alternative approach to results recently obtained by Malamud-Neidhardt and Kostenko-Malamud using regularized direct sums of boundary triplets.  
\end{abstract}

{\maketitle }

\begin{section}{Introduction}
We begin with a simple example. Let $\Delta_{0}=\frac{\partial^{2}}{\partial x^{2}}+\frac{\partial^{2}}{\partial\theta^{2}}$ be the Laplace-Beltrami operator on the  two-dimensional cylinder $\M_0 :=\RE_{+}\times\ci$ with respect to the flat Riemannian metric $g_{0}
=\left(\begin{smallmatrix}1&0\\0&1\end{smallmatrix}\right)$. Its minimal 
realization with domain $C^{\infty}_{c}(\M_0 )$ is symmetric and negative as a linear operator in the Hilbert space $L^{2}(\M_0 )= L^{2}(\RE_{+})\otimes L^{2}(\ci)$. We denote its Friedrichs' self-adjoint extension by $\Delta_{0}^{D}$; it corresponds to imposing Dirichlet boundary conditions at the boundary $\ci$, i.e. 
$\D(\Delta_{0}^{D})=\{u\in H^{2}(\M_0 ): \lim_{x\downarrow 0}\,u(x,\theta)=0\}$. Here $H^{2}(\M_0 )$ is the usual Sobolev-Hilbert space of order two. Let us denote by $H^{s}(\ci)$ the (fractional) Sobolev-Hilbert space 
of square-integrable functions $f$ on the $1$-dimensional torus $\ci$ such that $\sum_{k\in\Z}|k|^{2s}|\hat f_{k}|^{2}<+\infty$, where $\hat f_{k}$ is the usual Fourier coefficient $\hat f_{k}:=\frac1{\sqrt{2\pi}}\int_{\ci}e^{-ik\theta}f(\theta)\,d\theta$.  
Then $\gamma_0 :\D(\Delta_{0}^D)\to H^{\frac12}(\ci)$, the unique continuous linear map which on regular functions acts by 
$$
\gamma_0  u(\theta)=\lim_{x\downarrow 0}\,\frac{\partial u}{\partial x}(x,\theta)\,,
$$ is a concrete example of what we call an {\it abstract trace map} (see the next section),  i.e. $\gamma_0 $ is continuous (w.r.t. graph norm), surjective and its kernel is dense in $L^{2}(\M_0 )$. By partial Fourier transform with respect to the angular variable one gets 
$$L^{2}(\M_0 )=\op L^{2}(\RE_{+})\,,\quad\Delta_{0}^{D}=\op d^{2}_k\,,$$ 
where 
$$
d^{2}_{k}:\D(d^{2}_{k})\subset L^{2}(\RE_{+})\to L^{2}(\RE_{+})\,,
\quad
d^{2}_{k}f:=f''-k^{2}f\,,
$$
$$
\D(d^{2}_{k})=\D_{0}:=
\{f\in L^{2}(\RE_{+})\cap C^{1}(\overline\RE_{+}): f''\in L^{2}(\RE_{+})\,,\ f(0)=0\}\,.
$$
On $\D_{0}$ one can define the trace map $$
\hat\gamma_0 :\D_{0}\to\CO\,,\quad \hat\gamma_0  f:=f'(0)\,,
$$
which is bounded,  surjective and with a kernel dense in $L^{2}(\RE_{+})$. Moreover $\hat\gamma_{0}$ is bounded uniformly in $k\in\Z$ w.r.t. the graph norm of $d^{2}_{k}$, and so the infinite direct sum 
\be\label{hatgamma}
\op\hat\gamma_0 :\D(\op d_{k}^{2})\to \ell^{2}(\Z)\,.
\ee
is a well defined bounded operator.
Since $\gamma_{0}$ corresponds to $\op\hat\gamma_0 $ by partial Fourier transform, \eqref{hatgamma} does not define a trace map since it is not surjective: its range space is the strict subspace of $\ell^2(\Z)$ defined by 
$$h^{\frac12}(\Z):=\left\{\{s_{k}\}_{k\in\Z}\in\ell^{2}(\Z):\sum_{k\in\Z}|k|\,|s_{k}|^{2}<+\infty\right\}\simeq H^{\frac12}(\ci)\,.$$  This simple example shows that an infinite direct sum of trace maps can fail to be a trace map: the direct sum of the range spaces can be different from the range space of the sum.
\par
In Section 2 we provide a simple criterion which selects the right range space in order that 
the direct sum of trace maps is a trace map. Such a simple criterion uses a hypothesis involving the boundedness of operator-valued  sequences obtained composing the trace maps with their right inverses (see \eqref{sup}). Such a hypothesis seems a very strong one (indeed that allows an easy proof), however we show that always there exist right inverses such that 
\eqref{sup} holds true (see Lemma \ref{iota-tau}).  \par
In Section 3 we give an application to self-adjoint extensions of direct sums of symmetric operators and provide a couple of examples. We obtain that the methods here presented permit to obtain results equivalent to the ones recently obtained in \cite{MN12} and \cite{KM} using regularized boundary triplets (see Remark \ref{reg}). \par 
In Example 1 we determine the trace space for the evaluation map $f\mapsto\{f'(x_{n})\}_{n\in\NA}$  acting on functions $f\in H^{2}(\RE\backslash X)\cap H^{1}_{0}(\RE\backslash X)$ where $X=\{x_{n}\}_{n\in\NA}\subset\RE$, $x_{n}<x_{n+1}$. In this case Theorem \ref{lemma1} easily implies that the range space is a weighted $\ell^2$-space with weight $w_{n}=d^{-1}_{n}$, where $d_{n}:=x_{n+1}-x_{n}$. By Theorem \ref{estensioni} such a trace map can be used to define one-dimensional Schr\"odingier  operators with $\delta$ and $\delta'$ interaction supported on the discrete set $X$, thus providing a construction alternative to the one presented in \cite{KM}. \par 
In Example 2 we show that our criterion easily gives the correct trace space $H^{\frac12}(\ci)$ for the example provided at the beginning. Then we point out that the same criterion allows to prove that $H^{s}(\ci)$, $s=\frac12-\frac{\alpha}{1+\alpha}$, is (isomorphic to) the defect space of $\Delta^{\min}_{\alpha}$, $-1<\alpha<1$, the minimal realization of the Laplace-Beltrami operator $\Delta_{\alpha}:=\frac{\partial^{2}}{\partial x^{2}}-\frac{\alpha}{x}\,\frac{\partial\ }{\partial x}+x^{2\alpha}\,
\frac{\partial^{2}}{\partial \theta^{2}}$ corresponding to the degenerate/singular Riemannian metric $g_{\alpha}(x,\theta)
=\left(\begin{smallmatrix}1&0\\0&x^{-2\alpha}\end{smallmatrix}\right)$. We refer to the papers \cite{BL} and \cite{BP} for the almost-Riemannian geometric considerations leading to the study of  $\Delta_{\alpha}$ and to \cite{PP} for the classification of all self-adjoint extensions of  $\Delta^{\min}_{\alpha}$.
\end{section}

\begin{section}{Direct sums of abstract trace maps}\label{trace}
\noindent Let $\H_k$, $k\in\Z$, be a sequence of Hilbert spaces, with scalar product $\langle\cdot,\cdot\rangle_{k}$ and corresponding norm $\|\cdot\|_k$. 
On each $\H_k$ we consider a self-adjoint operator 
$$
A_k:\D(A_k)\subset \H_k\to \H_k
$$
and we denote by $\H_{(k)}$ the Hilbert space consisting of $\D(A_k)$ equipped with a scalar product $\langle \cdot,\cdot \rangle_{(k)}$ giving rise to a norm $\|\cdot\|_{(k)} $ equivalent to the graph one. 
\par
Let  $\h_k $, $k\in\Z$, be a sequence of  auxiliary Hilbert spaces with scalar product $[\cdot,\cdot]_{k} $ and corresponding norm 
$|\cdot |_{k} $. \par
Let
$$
\tau_k : \H_{(k)} \to \h_k \,,\quad k\in\Z\,,
$$ 
be a sequence of {\it abstract trace maps}, i.e. $\tau_{k}$ is a linear, continuous and surjective map 
such that its kernel $\K(\tau_{k} )$ is dense in $\H_{k}$. Since $\tau_{k}$ is continuous and surjective 
there exists a linear continuous right inverse
$$
\iota_{k}:\h_{k}\to \H_{(k)}\,,\quad \tau_{k}\iota_{k}=\uno
$$
(see e.g. \cite[Proposition 1, Section 6, Chapter 4]{Aub}). Since $\tau_{k}$ is surjective, $\iota_k$ 
is injective and so we can define a new scalar product on $\h_{k} $ by
$$[\phi_{k},\psi_{k}]_{(k)}:=[\iota_k^{*}\iota_k \phi_{k},\psi_{k}]_{k} \equiv\langle \iota_k \phi_{k},\iota_k \psi_{k}\rangle_{(k)}\,.
$$
It is immediate to check that $\h_{k}$ is complete w.r.t. the norm  $$|\phi_{k}|_{(k)}:=\|\iota_k\phi_{k}\|_{(k)}\equiv|(\iota_{k}^{*}\iota_k)^{1/2}\phi_{k}|_{k}\,.$$ Let us denote by $\h_{(k)}$ the Hilbert space given by $\h_{k}$ equipped with the scalar product $[\cdot,\cdot]_{(k)}$. We define
$$
\H:=\op\H_{k}\,,\quad\quad \H_{\circ}:=\op\H_{(k)}\,,
$$
$$ 
\h:=\op\h_{k}\,,\quad \quad
\h_{\circ}:=\op\h_{(k)}
$$
with corresponding norms $\|\cdot\|$, $\|\cdot\|_{\circ}$, $|\cdot|$, $|\cdot|_{\circ}$. \par 
We denote by $\n\cdot\n$ the operator norm of bounded linear operators.
\begin{theorem}\label{lemma1} 
Let $\iota_{k}$ be a linear continuous right inverse of $\tau_{k}$ and suppose that 
\begin{equation}\label{sup}
\sup_{k\in\Z}\,\n\iota_{k}\tau_{k}\n<+\infty\,.
\end{equation}
Then the linear map
$$
\tau:\H_{\circ}\to\h_{\circ}\,,\quad \tau(\op v_{k}):=
\op(\tau_{k}v_{k})
$$
is an abstract trace map, i.e. is continuous, surjective and its kernel $\K(\tau)$ is dense in $\H$.
\end{theorem}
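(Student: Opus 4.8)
The plan is to verify the three required properties of $\tau$ — continuity, surjectivity, and density of its kernel — essentially one at a time, with the hypothesis \eqref{sup} doing the work only in the surjectivity step.

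First, continuity. Set $C:=\sup_{k}\n\iota_{k}\tau_{k}\n$. For $\op v_{k}\in\H_{\circ}$ one has $|\tau_{k}v_{k}|_{(k)}=\|\iota_{k}\tau_{k}v_{k}\|_{(k)}\le C\,\|v_{k}\|_{(k)}$ by the very definition of the norm $|\cdot|_{(k)}$ on $\h_{(k)}$. Squaring and summing over $k$ gives $|\tau(\op v_{k})|_{\circ}^{2}=\sum_{k}|\tau_{k}v_{k}|_{(k)}^{2}\le C^{2}\sum_{k}\|v_{k}\|_{(k)}^{2}=C^{2}\|\op v_{k}\|_{\circ}^{2}$, so $\tau$ is bounded from $\H_{\circ}$ to $\h_{\circ}$ (and in particular the direct sum is well defined on all of $\H_{\circ}$, not merely on the algebraic sum). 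Note that this is the step where passing to the renormalized spaces $\h_{(k)}$ is crucial: the point of the construction of $[\cdot,\cdot]_{(k)}$ is precisely to make $\iota_{k}$ an isometry, so that $|\tau_{k}v_{k}|_{(k)}$ is literally $\|\iota_{k}\tau_{k}v_{k}\|_{(k)}$.

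Second, surjectivity. Given $\op\phi_{k}\in\h_{\circ}$, the natural candidate preimage is $\op(\iota_{k}\phi_{k})$. Since $\iota_{k}$ is an isometry from $\h_{(k)}$ onto its range in $\H_{(k)}$, we get $\|\iota_{k}\phi_{k}\|_{(k)}=|\phi_{k}|_{(k)}$, hence $\sum_{k}\|\iota_{k}\phi_{k}\|_{(k)}^{2}=\sum_{k}|\phi_{k}|_{(k)}^{2}<+\infty$, so $\op(\iota_{k}\phi_{k})\in\H_{\circ}$; and $\tau(\op\iota_{k}\phi_{k})=\op(\tau_{k}\iota_{k}\phi_{k})=\op\phi_{k}$. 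So $\tau$ is onto. (This step actually does not even need \eqref{sup}; it only uses that each $\iota_{k}$ is an isometry onto its range, which is automatic from the definition of $|\cdot|_{(k)}$. The hypothesis \eqref{sup} is what makes the map $\tau$ bounded rather than merely densely defined, i.e. it is what ties together the two steps so that the same direct sum decomposition serves both.)

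Third, density of $\K(\tau)$ in $\H$. Note $\op v_{k}\in\K(\tau)$ iff $\tau_{k}v_{k}=0$ for every $k$, i.e. $v_{k}\in\K(\tau_{k})$ for every $k$, so $\K(\tau)\cap(\text{algebraic direct sum})$ already contains $\bigoplus_{k}^{\mathrm{alg}}\K(\tau_{k})$, the set of finitely supported sequences with $v_{k}\in\K(\tau_{k})$. Given any $w=\op w_{k}\in\H$ and $\varepsilon>0$, first truncate: choose $N$ so that $\sum_{|k|>N}\|w_{k}\|_{k}^{2}<\varepsilon^{2}/4$. For each $k$ with $|k|\le N$, use density of $\K(\tau_{k})$ in $\H_{k}$ to pick $u_{k}\in\K(\tau_{k})$ with $\|w_{k}-u_{k}\|_{k}^{2}<\varepsilon^{2}/(4(2N+1))$, and set $u_{k}=0$ for $|k|>N$. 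Then $u:=\op u_{k}$ lies in $\K(\tau)$ and $\|w-u\|<\varepsilon$. Hence $\K(\tau)$ is dense in $\H$.

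The only genuinely delicate point is the first step, and only in the sense that one must be careful about which norm lives on which space: \eqref{sup} is an inequality between the graph-type norms $\|\cdot\|_{(k)}$ on $\H_{(k)}$, and it is the renormalization of the target Hilbert spaces to $\h_{(k)}$ that converts it into the boundedness of $\tau:\H_{\circ}\to\h_{\circ}$. Everything else is a routine $\varepsilon$-argument or a direct unwinding of definitions; I do not anticipate any real obstacle beyond bookkeeping.
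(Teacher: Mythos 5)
Your proof is correct and follows essentially the same route as the paper's: continuity via the definitional identity $|\tau_{k}v_{k}|_{(k)}=\|\iota_{k}\tau_{k}v_{k}\|_{(k)}$ combined with \eqref{sup}, surjectivity via the candidate preimage $\op(\iota_{k}\phi_{k})$ using that $\iota_{k}$ is an isometry of $\h_{(k)}$ onto its range, and density by truncation plus componentwise approximation in the kernels $\K(\tau_{k})$. The only differences are cosmetic (your choice of $\varepsilon^{2}/(4(2N+1))$ versus the paper's $2^{-|k|}$ weights in the density step), and your side remarks on where \eqref{sup} is and is not needed are accurate.
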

\begin{proof} (continuity) Let $v=\op v_k \in\H_{\circ}$. Then
\begin{align*}
|\tau v|_{\circ}^{2}=&\sum_{k\in\Z}\|\iota_ {k}\tau_ {k}v_ {k}\|_{(k)}^{2} \le
\left(\sup_{k\in\Z}\,\n\iota_ {k}\tau_ {k}\n\right)^{2}\sum_{k\in\Z}\|v_ {k}\|_{ (k)}^{2} \\
=&\left( \sup_{k\in\Z}\,\n\iota_ {k}\tau_ {k}\n\right)^{2}\|v\|^{2}_{\circ}\,.
\end{align*}
(surjectivity) Given $\phi=\op\phi_ {k} \in\h_{\circ}$, let us define $v:=\op v_ {k} $ by 
$v_ {k}=\iota_k\phi_ {k}\in \H_{ (k)}$. Then $v\in \H_{\circ}$ by
$$\sum_{k\in\Z}\|v_ {k}\|^{2}_{ (k)} =\sum_{k\in\Z}\|\iota_k \phi_ {k}\|^{2}_{ (k)} =\sum_{k\in\Z}|\phi_ {k}|^{2}_{ (k)} =|\phi|^{2}_{\circ}\,.
$$
(density) Given $v:=\op v_ {k} \in \H$ and $\epsilon >0$, let $N_{\epsilon}\ge 0$ such that 
$\sum_{|k|> N_{\epsilon}}\|v_ {k}\|^{2}_ {k} \le \epsilon/2$. Since $\K(\tau_ {k})$ is dense in 
$\H_ {k}$, there exist $v_{k,\epsilon }\in\K(\tau_ {k})$ such that $\|v_ {k}-v_{k,\epsilon}\|^{2}_ {k}\le 2^{-|k|}(\epsilon/6) $. Define  
$v_{\epsilon}:=\underset{|k|\le N_{\epsilon}}{\oplus}v_{k,\epsilon} $. Then $v_{\epsilon}\in\K(\tau)$ and 
$$
\|v-v_{\epsilon}\|^{2}\le \sum_{|k|\le N_{\epsilon}}\|v_ {k}-v_{k,\epsilon}\|^{2}_ {k} +\frac{\epsilon}2\le \frac{\epsilon}6\sum_{k\in\Z}2^{-|k|}+\frac{\epsilon}2=\epsilon \,.
$$
\end{proof}
\begin{remark}
Notice that Theorem \ref{lemma1} holds true for any sequence of Hilbert spaces $\H_{(k)}$, $k\in\NA$, such that each $\H_{(k)}$ is densely embedded in $\H_{k}$. However our hypotheses 
$\H_{(k)}=\D(A_{k})$ permits to show that it is always possible to find right inverses $\iota_{k}$ such that hypothesis \eqref{sup} is satisfied (see Lemma \ref{iota-tau} below).
\end{remark}
 For any $z\in\rho( A_{k})$, let us define the following bounded linear operators:
$$
R_{k}(z):\H_{k}\to\H_{(k)}\,,\quad R_{k}(z):=(- A_{k}+z)^{-1}\,,
$$
$$
G_k(z):\h_{k}\to \H_{k}\,,\quad G_k(z):=(\tau_{k}R_k(\bar z))^{*}\,.
$$
By resolvent identity one has 
\begin{align}\label{res-id}
G_k(w)-G_k(z)=&(z-w)R_{k}(w)G_k(z)=(z-w)R_{k}(z)G_k(w)\,.
\end{align}
Now let us take $z=\pm i$ in the above definitions and pose
$$R_{k}:=(-A_{k}+i)^{-1}\,,\quad G_{k}:=G_{k}(-i)\,,\quad G_{k}^{+}:=G_{k}(i)\,,
$$
$$\Gamma_{k}(z):=
\tau_{k}\left(\frac{G_{k}+G_{k}^{+}}2-G_{k}(z)\right)\,.
$$
Then $z\mapsto \Gamma_{k}(z)$ is a Weyl function (equivalently a Krein's Q-function), i.e. it satisfied the identities  
$$
\Gamma_k(z)-\Gamma_k(w)=(z-w)G_k(\bar w)^{*}G_k(z)
$$
and
$$
\Gamma_k(z)^{*}=\Gamma_k(\bar z)\,.
$$
Therefore the set $$
Z_{k}:=\{z\in\rho(A_{k}):0\in\rho(\Gamma_k(z))\}\,.
$$
is not void: $\CO\backslash\RE\subseteq Z_{k}$ (see e.g. \cite[Theorem 2.1]{P08}).\par
Posing 
$$
\Gamma_{k}:=\Gamma_{k}(-i)\,,
$$
one has the identities
\be
G_{k}^{+}-G_{k}=2iR_{k}G_{k}\,,
\ee
\begin{equation}\label{GG} 
G_{k}^{*}G_{k}=-i\,\Gamma_{k}
\end{equation}
and so
\begin{equation}\label{inverse1}
\iota_k :\h_{k} \to \H_{(k)}\,,\quad \iota_{k}:=i R_{k}G_{k}\Gamma_{k}^{-1}=R_{k}G_{k}(G_{k}^{*}G_{k})^{-1}\,.
\end{equation}
is a linear bounded right inverse of $\tau_{k}$. Moreover, since 
$R_{k}:\H_{k}\to \H_{(k)}$ is unitary w.r.t. the scalar product 
$$
\langle u_{k},v_{k}\rangle_{(k)}:=\langle (-A+i)u_{k},(-A+i)v_{k}\rangle_{k}\,,$$  
one has
\begin{equation}\label{inverse2}
\iota_{k}^{*}\iota_{k}
=(G_{k}^{*}G_{k})^{-1}\,.
\end{equation}
\begin{lemma}\label{iota-tau} 
Let $\iota_{k}$ be defined as in \eqref{inverse1}. Then $\n\iota_{k}\tau_{k}\n= 1$.
\end{lemma}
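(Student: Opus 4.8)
The plan is to recognise $P_k:=\iota_k\tau_k$, a bounded operator on $\H_{(k)}$, as a \emph{nonzero orthogonal projection} with respect to the scalar product $\langle\cdot,\cdot\rangle_{(k)}$; since a nonzero orthogonal projection has operator norm $1$, this gives $\n\iota_k\tau_k\n=1$ at once. Idempotency is immediate from $\tau_k\iota_k=\uno$: indeed $P_k^2=\iota_k(\tau_k\iota_k)\tau_k=\iota_k\tau_k=P_k$, and moreover $P_k\iota_k=\iota_k(\tau_k\iota_k)=\iota_k\neq 0$ (recall $\iota_k$ is injective and $\h_k\neq\{0\}$), so $P_k\neq 0$. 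Thus everything reduces to proving $P_k^{*}=P_k$, the adjoint being taken in $\H_{(k)}$ relative to $\langle\cdot,\cdot\rangle_{(k)}$.

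To identify $P_k$ I would exploit the unitarity of $R_k:\H_k\to\H_{(k)}$ noted just before \eqref{inverse2}, which says $R_k^{*}=R_k^{-1}$ for the pair of scalar products $\langle\cdot,\cdot\rangle_k,\ \langle\cdot,\cdot\rangle_{(k)}$. From $G_k=(\tau_kR_k)^{*}$ one gets $\tau_kR_k=G_k^{*}$, hence $\tau_k=G_k^{*}R_k^{-1}$ on $\H_{(k)}=\D(A_k)$; taking $\langle\cdot,\cdot\rangle_{(k)}$-adjoints and using $R_k^{*}=R_k^{-1}$ yields $\tau_k^{*}=R_kG_k$ and therefore $\tau_k\tau_k^{*}=G_k^{*}R_k^{-1}R_kG_k=G_k^{*}G_k$, which is boundedly invertible by \eqref{GG} since $0\in\rho(\Gamma_k)$. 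Comparing with \eqref{inverse1}, $\iota_k=R_kG_k(G_k^{*}G_k)^{-1}=\tau_k^{*}(\tau_k\tau_k^{*})^{-1}$, so that
$$P_k=\iota_k\tau_k=\tau_k^{*}(\tau_k\tau_k^{*})^{-1}\tau_k.$$
The right-hand side is manifestly self-adjoint, because $(\tau_k\tau_k^{*})^{-1}$ is self-adjoint (the inverse of the positive operator $G_k^{*}G_k$) and $(\tau_k^{*})^{*}=\tau_k$. Hence $P_k$ is a nonzero orthogonal projection and $\n\iota_k\tau_k\n=1$.

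The only delicate point is the bookkeeping of adjoints and scalar products in the implication $G_k=(\tau_kR_k)^{*}\ \Rightarrow\ \tau_k^{*}=R_kG_k,\ \tau_k\tau_k^{*}=G_k^{*}G_k$: one must keep track of which Hilbert-space adjoint is meant and invoke the unitarity of $R_k$ at the right moment; the rest is formal. Alternatively, one can bypass the computation of $\tau_k^{*}$ and argue directly: writing $\tau_kv=\tau_kR_k(R_k^{-1}v)=G_k^{*}u$ with $u:=R_k^{-1}v$ (so that $\|u\|_k=\|v\|_{(k)}$ by unitarity), the definition of $|\cdot|_{(k)}$ together with \eqref{inverse2} gives $\|\iota_k\tau_kv\|_{(k)}^{2}=|\tau_kv|_{(k)}^{2}=[(G_k^{*}G_k)^{-1}G_k^{*}u,G_k^{*}u]_k=|(G_k^{*}G_k)^{-1/2}G_k^{*}u|_k^{2}\le\|u\|_k^{2}=\|v\|_{(k)}^{2}$, since $(G_k^{*}G_k)^{-1/2}G_k^{*}$ is a contraction (indeed a coisometry, as $(G_k^{*}G_k)^{-1/2}G_k^{*}G_k(G_k^{*}G_k)^{-1/2}=\uno$); this yields $\n\iota_k\tau_k\n\le 1$, while $\n\iota_k\tau_k\n\ge 1$ because $\iota_k\tau_k$ is a nonzero idempotent.
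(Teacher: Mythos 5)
Your proof is correct and is essentially the paper's argument in more abstract clothing: the paper conjugates $\iota_{k}\tau_{k}$ by the unitary $(-A_{k}+i):\H_{(k)}\to\H_{k}$ and uses the decomposition $\H_{k}=\R(G_{k})\oplus\K(G_{k}^{*})$, which amounts precisely to recognizing $G_{k}(G_{k}^{*}G_{k})^{-1}G_{k}^{*}$ --- hence $\iota_{k}\tau_{k}$ itself --- as a nonzero orthogonal projection, exactly your $P_{k}=\tau_{k}^{*}(\tau_{k}\tau_{k}^{*})^{-1}\tau_{k}$. The only caveat, left implicit both by you (in passing) and by the paper, is that $\h_{k}\neq\{0\}$ is needed for the norm to equal, rather than merely be bounded by, $1$.
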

\begin{proof}
By \eqref{inverse1} 
one has
$$
\|\iota_{k}\tau_{k}v_{k}\|_{(k)}
=\|G_{k}(G^{*}_{k}G_{k})^{-1}G^{*}_{k}(-A_{k}+i)v_{k}\|_{k}\,.
$$
Since the range of $G_{k}$ is closed one has the decomposition $\H_{k}=\R(G_{k})\oplus \K(G_{k}^{*})$ and so  $(-A_{k}+i)v_{k}=G_{k}\phi_{k}\oplus w_{k}$. Therefore $$
\|\iota_{k}\tau_{k}v_{k}\|_{(k)}=\|G_{k}\phi_{k}\|_{k}\le \|(-A_{k}+i)v_{k}\|_{k}=
 \|v_{k}\|_{(k)}\,.
$$  
If $v_{k}=R_{k}G_{k}\phi_{k}$ then $\|\iota_{k}\tau_{k}v_{k}\|_{(k)}=
 \|v_{k}\|_{(k)}$.
\end{proof}
\begin{remark}\label{variant}
In the case there exists $\lambda\in\cap_{k\in\Z}\,\rho(A_{k})\cap\RE$ the previous reasonings have the following variant. By \eqref{res-id}
there follows
\begin{align*}
&G_{k}(-i)^{*}G_{k}(-i)\\
=&G_{k}( \lambda)^{*}(\uno +(\lambda-i)R_{k}(i))(\uno +(\lambda+i)R_{k}(-i))G_{k}(\lambda)\,.
\end{align*}
and so
\begin{align*}
|G_{k}(-i)^{*}G_{k}(-i)\phi_{k}|_{k}
\le&\n\uno +(\lambda-i)R_{k}(i)\n^{2}
|G_{k}(\lambda)^{*}G_{k}(\lambda)\phi_{k}|_{k}\\
\le&\left(1+\sqrt{1+\lambda^{2}}\,\right)^{2}
|G_{k}(\lambda)^{*}G_{k}(\lambda)\phi_{k}|_{k}\,.
\end{align*}
Since $G_{k}(-i)^{*}G_{k}(-i)$ is injective by \eqref{GG}, this shows that $G_{k}(\lambda)^{*}G_{k}(\lambda)$ is injective. Since it is self-adjoint and its range is closed (since the range of $G_{k}(\lambda)$ is closed), $G_{k}(\lambda)^{*}G_{k}(\lambda)$ is a continuous bijection. Then $$
\iota_{k}:=R_{k}G_{k}(G_{k}^{*}G_{k})^{-1}\,,
$$ 
is a bounded right inverse of $\tau_{k}$, where in this case we used the notation 
$$
R_{k}:=(-A_{k}+\lambda)^{-1}\,, \quad G_{k}:=G_{k}(\lambda)\,.
$$
Moreover, by using the scalar product  
$$
\langle u_{k},v_{k}\rangle_{(k)}:=\langle (-A_{k}+\lambda)u_{k},(-A_{k}+\lambda)v_{k}\rangle_{k}\,,
$$
one gets
$$
\iota_{k}^{*}\iota_{k}=(G_{k}^{*}G_{k})^{-1}\,.
$$
and, proceeding as in the proof of lemma \ref{iota-tau},
$$
\n\iota_{k}\tau_{k}\n=1\,.
$$
\end{remark}
Theorem \ref{lemma1} has the following alternative version where one can still use the original trace space $\h$ as long as one regularizes the traces $\tau_{k}$:
\begin{theorem}\label{lemma3}
Let us define $r_k:=(G_{k}^{*}G_{k})^{1/2}$ and 
$$\ttau_{k} :\H_{(k)}\to\h_{k}\,,\quad \ttau_{k}:=r_k^{-1}\tau_{k} $$
Then the linear map
$$
\ttau:\H_{\circ}\to\h\,,\quad \ttau (\op v_{k}):=
\op(\ttau_{k}v_{k})
$$
is continuous, surjective and its kernel $\K(\ttau)=\K(\tau)$ is dense in $\H$.
\end{theorem}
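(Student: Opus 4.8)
The plan is to recognize $\ttau$ as the composition of the trace map $\tau$ supplied by Theorem \ref{lemma1} with a unitary identification of the two target spaces $\h_{\circ}$ and $\h$, so that continuity, surjectivity and the description of the kernel all transfer automatically.

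First I would observe that $r_{k}=(G_{k}^{*}G_{k})^{1/2}$ is a bounded positive bijection of $\h_{k}$ with bounded inverse. Indeed, by \eqref{GG} one has $G_{k}^{*}G_{k}=-i\,\Gamma_{k}$, and since $-i\in Z_{k}$ (recall $\CO\backslash\RE\subseteq Z_{k}$) the operator $\Gamma_{k}$, hence $G_{k}^{*}G_{k}$, hence $r_{k}$, is boundedly invertible; this is precisely the invertibility already used in \eqref{inverse1}. In particular $\ttau_{k}=r_{k}^{-1}\tau_{k}$ is a well defined continuous surjective map, and since $r_{k}^{-1}$ is injective one has $\K(\ttau_{k})=\K(\tau_{k})$.

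Next I would fix the right inverses $\iota_{k}$ as in \eqref{inverse1}. By Lemma \ref{iota-tau} they satisfy $\n\iota_{k}\tau_{k}\n=1$, so hypothesis \eqref{sup} holds and Theorem \ref{lemma1} applies: $\tau:\H_{\circ}\to\h_{\circ}$ is an abstract trace map with $\K(\tau)$ dense in $\H$. With this choice, \eqref{inverse2} gives $\iota_{k}^{*}\iota_{k}=(G_{k}^{*}G_{k})^{-1}=r_{k}^{-2}$, so the norm on $\h_{(k)}$ is $|\phi_{k}|_{(k)}=|(\iota_{k}^{*}\iota_{k})^{1/2}\phi_{k}|_{k}=|r_{k}^{-1}\phi_{k}|_{k}$. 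Hence $r_{k}^{-1}:\h_{(k)}\to\h_{k}$ is a surjective isometry, and consequently $U:=\op r_{k}^{-1}:\h_{\circ}\to\h$ is a unitary isomorphism; note that, being an orthogonal sum of surjective isometries, $U$ is automatically isometric, so no uniform bound on $\n r_{k}^{-1}\n$ is needed.

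Finally I would check the elementary identity $\ttau=U\circ\tau$: for $v=\op v_{k}\in\H_{\circ}$ one has $U\tau v=\op\big(r_{k}^{-1}\tau_{k}v_{k}\big)=\op(\ttau_{k}v_{k})=\ttau v$. Since $\tau$ is continuous and surjective and $U$ is a linear homeomorphism, $\ttau$ is continuous and surjective; and since $U$ is injective, $\K(\ttau)=\K(\tau)$, which is dense in $\H$ by Theorem \ref{lemma1}. The argument is essentially formal once the setup is in place; the only point requiring some care is the bookkeeping of the several inner products carried by $\h_{k}$ --- the original $[\cdot,\cdot]_{k}$, the one $[\cdot,\cdot]_{(k)}$ making $\iota_{k}$ isometric, and the one making $r_{k}^{-1}$ isometric --- together with recording that $-i$ genuinely lies in the regularity set $Z_{k}$ so that $r_{k}$ is invertible.
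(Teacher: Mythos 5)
Your proof is correct and rests on exactly the same identity as the paper's, namely $r_{k}\iota_{k}^{*}\iota_{k}r_{k}=r_{k}(G_{k}^{*}G_{k})^{-1}r_{k}=\uno$; the paper packages this by reapplying Theorem \ref{lemma1} to the regularized traces $\ttau_{k}$ with right inverses $\tilde\iota_{k}:=\iota_{k}r_{k}$ (so that the induced inner product on $\h_{k}$ is the original one and the target is $\h$ itself), whereas you factor $\ttau=U\circ\tau$ through the unitary $U=\op r_{k}^{-1}:\h_{\circ}\to\h$. The two arguments are equivalent, and your explicit remark that $U$ is isometric summand-by-summand (so no uniform bound on $\n r_{k}^{-1}\n$ is needed) is a useful clarification of why the conclusion transfers.
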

\begin{proof} The proof is the same as in Theorem \ref{lemma1}.
It suffices to notice that $\tilde\iota_k:=\iota_kr_k$ is the right inverse of $\ttau_{k}$ 
and that $$(\tilde\iota_k)^{*}\tilde\iota_k=r_k\iota^{*}_{k}\iota_{k}r_k=r_k(G_{k}^{*}G_{k})^{-1}r_k=\uno\,.$$ 
\end{proof}
\begin{remark}\label{finite}
Notice that in this section $\Z$ can be replaced by any other countable set $N$ and that we can replace $[\cdot,\cdot]_{(k)}$ by a scalar product inducing an equivalent norm. Moreover, given a finite subset $F\subset N$, we can replace $\h_{(k)}$ by $\h_{k}$ for any $k\in F$.
\end{remark}
\end{section}
\begin{section}{Applications and Examples.} 
Let $S_{k}$, $k\in\Z$, be the sequence of symmetric operators defined by $S_{k}:=A_{k}|\K(\tau_{k})$,  where $A_{k}$ and $\tau_{k}$ are defined as in the previous section. Then $S:=\op S_{k}$ is a symmetric operator and $S=A|\K(\tau)$, where 
$A:=\op A_{k}$ and $\tau:=\op\tau_{k}$ is defined as in Theorem \ref{lemma1}. 
Here $\tau_{k}$ is considered as a map on $\H_{(k)}$ to $\h_{(k)}$, so that when calculating the adjoint $G_{(k)}(z)$ of $\tau_{k}(R_{k}(\bar z))$ one gets 
$$
G_{(k)}(z):=G_{k}(z)\iota_{k}^{*}\iota_{k}\,.
$$
Next Lemma shows that the direct sums $\op G_{(k)}(z)$ appearing in Theorem \ref{estensioni} below are well defined bounded operators:
\begin{lemma}\label{gz}
$$
\forall z\in\bigcap_{k\in\Z}\,\rho(A_{k})\,,\quad \sup_{k\in\Z}\,
\n G_{(k)}(z)\n<+\infty\,. 
$$
\end{lemma}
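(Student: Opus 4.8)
The plan is to reduce the claimed uniform bound to a $k$‑independent bound on $\n R_k(\bar z)\n$, the norm of $R_k(\bar z)$ viewed as a map $\H_k\to\H_{(k)}$, and then to obtain the latter from the spectral theorem for $A_k$.

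\emph{Step 1: reduction.} By definition $G_{(k)}(z)$ is the adjoint of $\tau_k R_k(\bar z)\colon\H_k\to\h_{(k)}$, with $\tau_k$ read as a map $\H_{(k)}\to\h_{(k)}$; hence $\n G_{(k)}(z)\n=\n\tau_k R_k(\bar z)\n_{\H_k\to\h_{(k)}}$. I would factor this as $\tau_k\circ R_k(\bar z)$ and first observe that $\n\tau_k\n_{\H_{(k)}\to\h_{(k)}}\le 1$: by the very definition of the scalar product $[\cdot,\cdot]_{(k)}$ on $\h_k$, the right inverse $\iota_k$ of \eqref{inverse1} is an isometry of $\h_{(k)}$ onto a subspace of $\H_{(k)}$, so that for $v\in\H_{(k)}$
$$
|\tau_k v|_{(k)}=\|\iota_k\tau_k v\|_{(k)}\le\n\iota_k\tau_k\n\,\|v\|_{(k)}=\|v\|_{(k)}
$$
by Lemma \ref{iota-tau}. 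Submultiplicativity of the operator norm then gives $\n G_{(k)}(z)\n\le\n R_k(\bar z)\n_{\H_k\to\H_{(k)}}$.

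\emph{Step 2: the resolvent estimate.} Recall that $R_k=(-A_k+i)^{-1}$ is unitary from $\H_k$ onto $\H_{(k)}$, the latter carrying the scalar product $\langle(-A_k+i)\cdot,(-A_k+i)\cdot\rangle_k$. By the resolvent identity $R_k(\bar z)=F_k R_k$, where $F_k:=\uno+(i-\bar z)R_k(\bar z)=(-A_k+i)(-A_k+\bar z)^{-1}$ is a bounded Borel function of the self-adjoint operator $A_k$. Since $F_k$ commutes with $-A_k+i$, its operator norm on $\H_{(k)}$ equals its operator norm on $\H_k$, and the spectral theorem identifies the latter with $\sup_{\lambda\in\sigma(A_k)}|\lambda-i|/|\lambda-\bar z|$. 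Combining with Step 1,
$$
\n G_{(k)}(z)\n\ \le\ \n F_k\n\ =\ \sup_{\lambda\in\sigma(A_k)}\frac{\sqrt{1+\lambda^2}}{|\lambda-\bar z|}\,.
$$

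\emph{Step 3: uniformity in $k$ (the main point).} It remains to see that the right‑hand side is bounded independently of $k$. The function $\lambda\mapsto\sqrt{1+\lambda^2}/|\lambda-\bar z|$ on $\RE$ is continuous away from $\mathrm{Re}\,\bar z$ and tends to $1$ as $\lambda\to\pm\infty$, so it is bounded on $\bigcup_k\sigma(A_k)$ by a constant depending only on $z$ as soon as $\bar z$ stays at positive distance from that union. For $z\in\CO\setminus\RE$ this is automatic, since $|\lambda-\bar z|\ge|\mathrm{Im}\,z|$ for all real $\lambda$, and one gets $\sup_k\n G_{(k)}(z)\n\le\sup_{\lambda\in\RE}\sqrt{1+\lambda^2}/|\lambda-\bar z|<+\infty$. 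I expect the delicate point to be exactly this uniformity when $z$ is real: then one uses that $z$ also lies at positive distance from $\bigcup_k\sigma(A_k)$, e.g. when $z\in\rho(\op A_k)$; alternatively, since the hypothesis gives $z\in\bigcap_k\rho(A_k)$, one may rerun the construction of Remark \ref{variant} with $z$ itself as base point, which makes $R_k(z)$ unitary from $\H_k$ onto $\H_{(k)}$ and yields $\n G_{(k)}(z)\n\le 1$ directly from Step 1.
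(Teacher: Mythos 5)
Your argument is correct and, for non-real $z$, complete; it reaches the lemma by a route that is close in spirit to the paper's but implemented on the other side of the adjoint, and it is in fact more careful. The paper first shows via \eqref{inverse2} that $G_{(k)}:=G_{(k)}(-i)$ is an isometry of $\h_{(k)}$ into $\H_{k}$ (so $\n G_{(k)}\n=1$) and then uses the resolvent identity \eqref{res-id} to peel off a spectral multiplier, claiming the factor $\n\uno-(i+z)R_{k}\n\le 2+|z|$. Your Steps 1--2 (norm of the adjoint, $\n\tau_{k}\n_{\H_{(k)}\to\h_{(k)}}\le1$ from Lemma \ref{iota-tau}, unitarity of $R_{k}:\H_{k}\to\H_{(k)}$, spectral theorem) instead produce the factor $\sup_{\lambda\in\sigma(A_{k})}\sqrt{1+\lambda^{2}}/|\lambda-\bar z|$ --- and this is actually the \emph{correct} norm of the multiplier: \eqref{res-id} with $w=-i$ gives $G_{k}(z)=\bigl(\uno-(z+i)R_{k}(z)\bigr)G_{k}(-i)=(-A_{k}-i)(-A_{k}+z)^{-1}G_{k}(-i)$, so the resolvent in the correction factor is $R_{k}(z)$, not $R_{k}$, and the paper's uniform bound $2+|z|$ rests on that slip. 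For $\mathrm{Im}\,z\ne0$ your bound is $k$-independent, $\sup_{\lambda\in\RE}\sqrt{1+\lambda^{2}}/|\lambda-\bar z|<+\infty$, which is all that is needed for the construction based at $z_{\circ}=\pm i$.

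The one point where your proposal does not close the argument is exactly the one you flag in Step 3: for real $z\in\bigcap_{k}\rho(A_{k})$ the quantity $\sup_{\lambda\in\sigma(A_{k})}\sqrt{1+\lambda^{2}}/|\lambda-z|$ need not be bounded in $k$ unless $z$ keeps a positive distance from $\bigcup_{k}\sigma(A_{k})$, and your fallback of rerunning Remark \ref{variant} with base point $z$ does not rescue the statement as written: rebasing replaces $\iota_{k}^{*}\iota_{k}$ by $(G_{k}(z)^{*}G_{k}(z))^{-1}$ and hence changes the norm on $\h_{(k)}$; these norms are equivalent to the original ones only $k$ by $k$, not uniformly, so the resulting inequality $\n G_{(k)}(z)\n\le1$ concerns a different operator norm. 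This is, however, a defect of the lemma (and of the paper's own proof) rather than of your argument: as stated, for real $z$ the uniform bound requires the additional hypothesis $\mathrm{dist}\bigl(z,\bigcup_{k}\sigma(A_{k})\bigr)>0$, i.e. $z\in\rho(\op A_{k})$ rather than merely $z\in\bigcap_{k}\rho(A_{k})$ --- a condition that is anyway needed for $\op(-A_{k}+z)^{-1}$ in the resolvent formula of Theorem \ref{estensioni} to be bounded. With that caveat made explicit, your proof is sound.
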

\begin{proof}
By \eqref{res-id} one has, posing $G_{(k)}:=G_{(k)}(-i)$,
$$
\n G_{(k)}(z)\n\le \n\uno-(i+z)R_{k}\n\,\n G_{(k)}\n\le (2+|z|)\,\n G_{(k)}\n\,.
$$
By \eqref{inverse2},
$$
\|G_{(k)}\phi_{k}\|_{k}=\|G_{k}\iota^{*}_{k}\iota_{k}\phi_{k}\|_{k}=\langle\iota^{*}_{k}\iota_{k}\phi_{k},G_{k}^{*}G_{k}\iota^{*}_{k}\iota_{k}\phi_{k}\rangle_{k}=|\phi_{k}|_{(k)}
$$
and so $$
\n G_{(k)}\n=1\,.
$$
\end{proof}
By Theorem \ref{lemma1} and by the results provided in \cite[Theorem 2.2]{P01} and \cite[Theorem 2.1]{P08} one gets the following
\begin{theorem}\label{estensioni} 
The set of self-adjoint extensions of $S$ is parametrized by couples 
$(\Pi,\Theta)$, where $\Pi$ is an orthogonal projection in $\h_{\circ}=\op \h_{ (k)} $ and $\Theta$ is a self-adjoint operator in the Hilbert space $\text{\rm Range}(\Pi)$. Denoting by $A_{\Pi,\Theta}$ the self-adjoint extension associated with $(\Pi,\Theta)$ one has
$$
A_{\Pi,\Theta}(\op v_ {k} )=\op \left(A_ {k}v^{\circ}_ {k}+\left(\text{\rm Re}(z_{\circ})G_ {(k)}^{\circ}+i\,
\text{\rm Im}(z_{\circ})G_ {(k)}^{\diamond}\right)\phi_ {k}\right) \,,
$$ 
\begin{align*}
\D(A_{\Pi,\Theta})=&\left\{\op v_{k}\in\H:v_{k}=v^{\circ}_{k}+G^{\circ}_{(k)}\phi_{k}\,,\ \op v^{\circ}_{k}\in\op\D(A_{k})\,,\right.\\ &\ \  \left.\op\phi_{k}\in\D(\Theta)\,,\ \Pi(\op\tau_{k}v^{\circ}_{k})=\Theta(\op\phi_{k})
\right\}\,.
\end{align*}
Moreover, for any $z\in(\cap_{k\in \Z}\,\rho(A_{k}))\cap\rho(A_{\Pi,\Theta})$, 
\begin{align*}
&(-A_{\Pi,\Theta}+z)^{-1}=\op(-A_k+z)^{-1}\\
&+\op G_{(k)}(z)\Pi\big(\Theta+\Pi\op \tau_{k}(G_{(k)}^{\circ}-G_{(k)}(z))\Pi\big)^{-1}
\Pi\op G^{*}_{(k)}(z)\,.
\end{align*}
Here $$G_{(k)}^{\circ}:=\frac12 (G_{(k)}(z_{\circ})+G_{(k)}(\bar z_{\circ}))\,,\quad
G_{(k)}^{\diamond}:=\frac12 (G_{(k)}(z_{\circ})-G_{(k)}(\bar z_{\circ}))$$
and $z_{\circ}\in\cap_{k\in\Z}\,\rho(A_{k})$.
\end{theorem}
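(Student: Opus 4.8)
The plan is to deduce the statement from the abstract extension theory developed in \cite[Theorem 2.2]{P01} and \cite[Theorem 2.1]{P08}, applied to the single pair $(A,\tau)$ with $A:=\op A_{k}$ self-adjoint in $\H=\op\H_{k}$ and $\tau:=\op\tau_{k}:\H_{\circ}\to\h_{\circ}$. So the first step is to verify that this pair satisfies the hypotheses of those theorems. The operator $A$ is self-adjoint, with $\CO\backslash\RE$ contained both in $\rho(A)$ and in each $\rho(A_{k})$, and, with the scalar product \eqref{inverse2}, the space $\H_{\circ}$ coincides with $\D(A)$ equipped with its graph norm. Using the right inverses $\iota_{k}$ of \eqref{inverse1}, Lemma \ref{iota-tau} gives $\sup_{k\in\Z}\n\iota_{k}\tau_{k}\n=1$, so hypothesis \eqref{sup} is satisfied and Theorem \ref{lemma1} shows that $\tau$ is an abstract trace map on $\D(A)$: continuous, surjective, with kernel $\K(\tau)$ dense in $\H$. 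Consequently $S=A|\K(\tau)=\op S_{k}$ is densely defined and symmetric, and \cite[Theorem 2.2]{P01}, \cite[Theorem 2.1]{P08} apply directly, giving a parametrization of the self-adjoint extensions of $S$ by couples $(\Pi,\Theta)$ with $\Pi$ an orthogonal projection in the trace space $\h_{\circ}$ and $\Theta$ self-adjoint in $\text{\rm Range}(\Pi)$, together with abstract formulas for the domain, the action and the resolvent.

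The second step is to identify the abstract objects of \cite{P01,P08} with the direct sums appearing in the statement. The operator $G(z):=(\tau R(\bar z))^{*}$, with adjoint taken with respect to the scalar products of $\H$ and of $\h_{\circ}=\op\h_{(k)}$, acts componentwise; since the scalar product on $\h_{(k)}$ is the transported one $[\phi,\psi]_{(k)}=[\iota_{k}^{*}\iota_{k}\phi,\psi]_{k}$, computing the adjoint of $\tau_{k}R_{k}(\bar z)$ produces exactly the extra factor $\iota_{k}^{*}\iota_{k}$, i.e. $G(z)=\op G_{(k)}(z)$ with $G_{(k)}(z)=G_{k}(z)\iota_{k}^{*}\iota_{k}$, which is the definition adopted in the text. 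By Lemma \ref{gz} one has $\sup_{k\in\Z}\n G_{(k)}(z)\n<+\infty$ for every $z\in\bigcap_{k}\rho(A_{k})$, so that $\op G_{(k)}(z)$ is a bounded operator from $\h_{\circ}$ to $\H$ and every direct sum occurring in the statement, in particular the Krein-type resolvent formula, is well defined. The resolvent identity \eqref{res-id} propagates to the operators $G_{(k)}(z)$, yielding both their analytic dependence on $z$ and the splittings into $G_{(k)}^{\circ}$ and $G_{(k)}^{\diamond}$; in the same componentwise way one recovers the regularized Weyl function $\Pi\op\tau_{k}(G_{(k)}^{\circ}-G_{(k)}(z))\Pi$ as a well-defined operator on $\text{\rm Range}(\Pi)$.

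The final step is to substitute these identifications into the abstract formulas of \cite[Theorem 2.2]{P01} and \cite[Theorem 2.1]{P08}: the abstract boundary condition becomes $\Pi(\op\tau_{k}v_{k}^{\circ})=\Theta(\op\phi_{k})$, the abstract action of the extension becomes the stated direct sum, and the abstract resolvent formula becomes the displayed one once $G(z)$, $G(z)^{*}$ and the regularized Weyl function are read off as the corresponding direct sums. I expect the only real difficulty to be bookkeeping: keeping the two scalar products on $\h_{\circ}$ distinct --- the original $|\cdot|$ and the transported $|\cdot|_{\circ}$ --- making sure that ``orthogonal projection'' and ``self-adjoint'' always refer to $\h_{\circ}$ with its $[\cdot,\cdot]_{\circ}$ structure, and checking that the factors $\iota_{k}^{*}\iota_{k}$ cancel consistently throughout, which is precisely what \eqref{inverse2} and the identity $\n G_{(k)}\n=1$ from Lemma \ref{gz} encode. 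Once that is arranged, the remainder is a routine transcription of the cited theorems.
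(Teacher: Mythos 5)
Your proposal is correct and follows essentially the same route as the paper: the paper's own proof is precisely the observation that Theorem \ref{lemma1} (with the right inverses of \eqref{inverse1} and Lemma \ref{iota-tau} supplying hypothesis \eqref{sup}) makes $\tau=\op\tau_{k}$ an abstract trace map for $A=\op A_{k}$, that the transported scalar products force $G_{(k)}(z)=G_{k}(z)\iota_{k}^{*}\iota_{k}$ with the uniform bound of Lemma \ref{gz} making all direct sums well defined, and that \cite[Theorem 2.2]{P01} and \cite[Theorem 2.1]{P08} then yield the stated parametrization, domain, action and Krein-type resolvent formula. The only nitpick is your reference to ``the scalar product \eqref{inverse2}'': the relevant scalar product is $\langle(-A_{k}+i)u_{k},(-A_{k}+i)v_{k}\rangle_{k}$, while \eqref{inverse2} is the resulting identity $\iota_{k}^{*}\iota_{k}=(G_{k}^{*}G_{k})^{-1}$.
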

\begin{remark} By the definition of $\D(A_{\Pi,\Theta})$ one has that $A_{\Pi,\Theta}$ is a direct sum if and only if both $\Pi$ and $\Theta$ are direct sums. \par
In the case $z_{\circ}\in\RE$ one has $G_{(k)}^{\diamond}=0$ and
$$
\tau_{k}(G^{\circ}_{(k)}-G_{(k)}(z))=z\,G_{k}(z_{\circ})^{*}G_{k}(z)\,\iota^{*}_{k}\iota_{k}
=z\,G_{k}(z)^{*}G_{k}(z_{\circ})\,\iota^{*}_{k}\iota_{k}
$$
In the case $0\in\cap_{k\in \Z}\,\rho(A_{k})$ one can take $z_{\circ}=0$, so that $$A_{\Pi,\Theta}(\op v_{k})=\op A_{k}v_{k}^{\circ}\,.$$ 
\end{remark}
\begin{remark}\label{reg}
Theorem \ref{estensioni} has an alternative version in the case one uses the trace map furnished by Lemma \ref{lemma3}. In this case the extension parameter $(\Pi,\Theta)$ is such that  $\Pi$ is an orthogonal projection in $\h=\op \h_{k}$ and $\Theta$ is a self-adjoint operator in the Hilbert space associated with $\Pi$. The statement of Theorem \ref{estensioni} remains unchanged replacing $\tau_{k}$ with $\tilde \tau_{k}$ and $G_{(k)}(z)$ with $$\tilde G_{k}(z):=G_{(k)}(z)r_k=G_{k}(z) r_k^{-1}=(\tilde\tau_{k}R_{k}(\bar z))^{*}\,.
$$ 
Also in this case the norm of $\tilde G_{(k)}(z):\H_{k}\to\h_{k}$ is bounded uniformly in $k\in\Z$ for any $z\in\cap_{k\in\Z}\,\rho(A_{k})$. 
\end{remark}
\begin{remark}\label{reg} 
By \cite{P04} and \cite[Section 4]{P08}, Theorem \ref{estensioni} and Lemma \ref{lemma3} provide results equivalent to the ones that can be obtained using Boundary Triplet Theory.  Let us for simplicity take $z_{\circ}=i$. Then (see \cite[Theorem 3.1]{P04})
$$
\D(S_{k}^{*})=\{v_{k}\in\H:v_{k}=v^{\circ}_{k}+G^{\circ}_{k}\phi_{k}\,,\ v^{\circ}_{k}\in\D(A_{k})\,,\ \phi_{k}\in\h_{k}\}\,,
$$
$$
S^{*}_{k}:\D(S_{k}^{*})\subseteq\H_{k}\to\H_{k}\,,\quad S^{*}_{k}v_{k}:=A_{k}v_{k}+R_{k}G_{k}\phi_{k}\,,
$$
and the triple $\{\h_{k},\beta_{k,0},\beta_{k,1}\}$, where
$$
\beta_{k,0}:\D(S^{*}_{k})\to\h_{k}\,,\quad \beta_{k,0}v_{k}:=\tau_{k}
v^{\circ}_{k}\,,
$$
$$
\beta_{k,1}:\D(S^{*}_{k})\to\h_{k}\,,\quad \beta_{k,1}v_{k}:=\phi_{k}\,,
$$
is a boundary triple for $S^{*}_{k}$, i.e. $\beta_{k,1}$ and $\beta_{k,2}$ are surjective and  
satisfy the Green-type identity
\be\label{green}
\langle S_{k}^{*}u_{k}, v_{k}\rangle_{k}-\langle u_{k}, S_{k}^{*}v_{k}\rangle_{k}=
[\beta_{1,k}u_{k},\beta_{k,0}v_{k}]_{k}-[\beta_{k,0}u_{k},\beta_{k,1}v_{k}]_{k}\,.
\ee
Moreover the Weyl function of $S_{k}$ is 
$$
M_{k}(z)=\tau_{k}\left(\frac{G_{k}+G^{+}_{k}}2-G_{k}(z)\right)
$$ (see \cite[Theorem 3.1]{P04}).
By \eqref{green} there follows that  $\{\h_{k},r_{k}\beta_{k,1},r_{k}^{-1}\beta_{k,2}\}$, where $r_{k}$ is defined in Lemma \ref{lemma3}, is a boundary triple for $S^{*}_{k}$ as well with 
Weyl function $r_{k}^{-1}M_{k}(z)\,r_{k}^{-1}$. 
\par
By Lemma \ref{lemma3} and \cite[Theorem 1.6]{P04} one gets
\begin{align*}
\D(S^{*})
=&\left\{\op v_{k}:v_{k}=v^{\circ}_{k}+\tilde G^{\circ}_{k}\phi_{k}\,,\ \op v^{\circ}_{k}\in\op\D(A_{k})\,,\ \op\phi_{k}\in\h\right\}\\
\equiv&\left\{\op v_{k}:v_{k}=v^{\circ}_{k}+G^{\circ}_{k}\psi_{k}\,,\ \op v^{\circ}_{k}\in\op\D(A_{k})\,,\ \op r_{k}\psi_{k}\in\h\right\}
\end{align*}
and the triple $\{\h,\tilde\beta_{0},\tilde\beta_{1}\}
$ is a boundary triple for $S^{*}=\op S_{k}^{*}$ with Weyl function $\op (r_{k}^{-1}M_{k}(z)\,r_{k}^{-1})$, where
$$
\tilde \beta_{0}
:\D(S^{*})\to\h\,,\quad 
\tilde \beta_{0}(\op v_{k}):=\op (r_{k}^{-1}\beta_{0,k})(\op v_{k})\equiv
\op (r_{k}^{-1}\tau_{k}v^{\circ}_{k})\,,
$$
$$
\tilde \beta_{1}
:\D(S^{*})\to\h\,,\quad 
\tilde \beta_{1}(\op v_{k}):=\op (r_{k}\beta_{1,k})(\op v_{k})\equiv\op (r_{k}\psi_{k})
\equiv \op\phi_{k}.
$$
Let us notice that the norm of $r_{k}^{-1}\tau_{k}\equiv\tilde \tau_{k}:\H_{(k)}\to\h_{k}$ is bounded uniformly in $k\in\Z$ by Lemma \ref{lemma3}, and so $\tilde \beta_{0}$ is well defined. $\tilde \beta_{1}$ is well defined as well by the definition of $\D(S^{*})$.\par
In conclusion this provides results on direct sums of regularized boundary triplets of the kind  recently obtained in \cite[Section 5]{MN12}, \cite[Section 3]{KM}, \cite[Section 2]{CMP}. 
\end{remark}

\vskip8pt\noindent
{\bf Example 1.} Given $\{x_{n}\}_{n\in\NA}\subset\RE$, $x_{n}<x_{n+1}$, let $\H_{n}:=L^{2}(I_{n})$, $n\ge 0$, where $I_{0}=(-\infty,x_{1}]$ and $I_{n}=[x_{n},x_{n+1}]$, $n\in\NA$. Let 
$$
A_{n}:\D(A_{n})\subset  L^{2}(I_{n})\to 
L^{2}(I_{n})\,,\quad A_{n}u=u''\,,\quad n\ge 0\,,
$$
$$
\D(A_{0}):=\{u\in L^{2}(I_{0})\cap C^{1}(I_{0}): u''\in L^{2}(I_{0})\,,\ u(x_{1})=0\}\,,
$$
$$
\D(A_{n}):=\{u\in C^{1}(I_{n}): u''\in L^{2}(I_{n})\,,\ u(x_{n})=u(x_{n+1})=0\}\,,\quad n\in\NA\,,
$$
$$
\tau_{0}:\H_{(0)}\to\CO\,,\quad \tau_{0}u:=-u'(x_{1})\,.
$$
$$
\tau_{n}:\H_{(n)}\to\CO^{2}\,,\quad \tau_{n}u:=(u'(x_{n}),-u'(x_{n+1}))\quad n\in\NA\,.
$$
 For any $n\ge 0$, the map $\tau_{n}$ is continuous, surjective and has a kernel dense in $L^{2}(I_{n})$.\par
By Remark \ref{finite} we can suppose $n\not=0$ and, since $0\in\cap_{n>0}\,\rho(A_{n})$, we can use the results provided in Remark \ref{variant} with $\lambda=0$. \par 
The kernel of $(-A_{n})^{-1}$, $n>0$, is given by 
\begin{align*}
&K_{n}(x,y)\\=&\frac1{d_{n}}\left((x_{n+1}-x)(y-x_{n})\theta(x-y)+(x-x_{n})(x_{n+1}-y)\theta(y-x)\right)\,,
\end{align*}
where $\theta$ denotes Heaviside's function and $d_{n}:=x_{n+1}-x_{n}$. Therefore
$$
(G_{n}\xi)(x)=\frac{1}{d_{n}}\,(\xi_{1}(x_{n+1}-x)+\xi_{2}(x-x_{n}))\,,\quad \xi\equiv(\xi_{1},\xi_{2})\,,
$$
$$
G^{*}_{n}u\equiv\frac1{d_{n}}\,\left(\int_{x_{n}}^{x_{n+1}}(x_{n+1}-x)u(x)\,dx,
\int_{x_{n}}^{x_{n+1}}(x-x_{n+1})u(x)\,dx\right)
$$
and so by straightforward calculations one gets that $G_{n}^{*}G_{n}:\CO^{2}\to\CO^{2}$ corresponds to the positive-definite  matrix
$$G_{n}^{*}G_{n}\equiv {d_n}\,\left[\begin{matrix}1/3&1/6\\1/6&1/3\end{matrix}\right]\,.
$$ 
In conclusion on $\h_{(n)}=\CO^{2}$ we can put  the equivalent scalar product
$$
[\xi,\zeta]_{(n)}:=\frac {\xi\cdot\zeta}{{d_n }}\,.
$$
Hence, by Theorem \ref{lemma1}, denoting by 
$\ell^{2}_{d}(\NA)$ the weighted 
$\ell^{2}$-space 
$$
\ell^{2}_{d}(\NA):=\left\{\{s_{n}\}_{n\in\NA}:\sum_{n\in \NA}\frac {|s_{n}|^{2}}{d_{n}}<+\infty\right\}\,,
$$
one gets that 
\be\label{te}
\tau:=\tau_{0}\oplus(\opn\tau_{n}):\H_{0}\oplus(\opn\H_{(n)})\to\CO\oplus
\ell^{2}_{d}(\NA)\oplus\ell^{2}_{d}(\NA)\ee
is continuous, surjective and has a kernel dense in $L^{2}(-\infty,x_{\infty})$, $x_{\infty}:=\sup_{n\in\NA}x_{n}$.
Notice that $\ell^{2}_{d}(\NA)=\ell^2(\NA)$ if and only if 
$$
0<d_{*}:=\inf_{n\in\NA}d_{n}\le d^{*}:=\sup_{n\in\NA}d_{n}<+\infty\,.
$$
By using Theorem \ref{estensioni} with trace map $\tau$ defined in \eqref{te}, one gets the same kind of self-adjoint extensions given in \cite{KM} (the case in which $0<d_{*}\le d^{*}<+\infty$ has been studied in \cite{K79}). Such extensions describe one-dimensional Schr\"odinger operators in $L^{2}(-\infty,x_{\infty})$ with $\delta$ and $\delta'$ interactions supported on the discrete set $X=\{x_{n}\}_{n\in\NA}$. These operators have been studied in \cite[Chapters III.2 and III.3]{AGH-KH}, when $0<d_{*}\le d^{*}<+\infty$ and $x_{\infty}=+\infty$, and in \cite{KM} when $d^{*}<+\infty$. Analogous considerations, with $A_{n}$ given by the one-dimensional Dirac operator with Dirichlet boundary conditions on the interval $I_{n}$, lead to self-adjoint extension describing one-dimensional Dirac operators with    $\delta$ and $\delta'$ interactions on the discrete set $X=\{x_{n}\}_{n\in\NA}$ (see \cite[Appendix J]{AGH-KH}, for the case $X$ in which is a finite set and \cite{CMP} for the general case).
\vskip8pt
\par\noindent
{\bf Example 2.} At first let us check that applying Thereom \ref{lemma1} to the example given in the introduction one gets the right trace space $\h_{\circ}=h^{\frac12}(\Z)$. Hence here $A_{k}=d^{2}_{0}-k^{2}$. By Remark \ref{finite} we can suppose $k\not=0$ and, since  $0\in\cap_{k\in\Z\backslash\{0\}}\,\rho(A_{k})$, we can use the results provided in Remark \ref{variant} with $\lambda=0$. Since the kernel of $(-d^{2}_{0}+z^{2})^{-1}$, Re$(z)>0$, is given  by 
$$
K(z;x,y)=\frac{e^{-z\,|x-y|}-e^{-z\,(x+y)}}{2z}\,,
$$
one easily gets
$$
G_{k}^{*}\equiv \hat\gamma_{0} (-d^{2}_{0}+k^{2})^{-1}:L^{2}(\RE_{+})\to\CO \,,\quad G^{*}_{k}f=\int_{0}^{\infty} e^{-|k|\,x} f(x)\,dx
$$
and so $G_{k}^{*}G_{k}:\CO\to\CO$ is given by the multiplication by the real number
$$
G_{k}^{*}G_{k}\equiv\int_{0}^{\infty}e^{-2|k|\,x}\,dx=
\frac{1}{2\,|k|}\,.
$$
Therefore $\h_{(k)}=\CO$ is equipped with the scalar product 
$$
[\xi,\zeta]_{(k)}:=|k|\,\xi\cdot\zeta 
$$ 
and so
$$
\h_{\circ}=\CO\oplus\left(\underset{k\in\Z\backslash\{0\}}{\oplus} \h_{(k)}\right)=\left\{\{s_{k}\}_{k\in\Z}\in\ell^{2}(\Z):\sum_{k\in\Z}|k|\,|s_{k}|^{2}<+\infty\right\}
\,.
$$
By using Theorem \ref{estensioni} with trace map 
$$
\gamma_{0}=\op\hat\gamma_{0} :\op\D_{0}\to h^{\frac12}(\Z)\,,
$$
then one can determine all self-adjoint extensions of the minimal Laplacian on $\M_0 $.\par 
Such an example can be generalized in the following way: let $\M_\alpha  $ be $\RE_{+}\times\ci$ endowed with the singular/degenerate Riemannian metric 
$$
g_{\alpha}(x,\theta)=\left(\begin{matrix} 1&0\\0&x^{-2\alpha}\end{matrix}\right)\,,\quad\alpha\in\RE\,.
$$ 
The Riemannian volume form corresponding to $g_{\alpha}$ is $d\omega=x^{-\alpha}dxd\theta$ and so we denote by  $L^{2}(\M_\alpha  )$ be the Hilbert space  
$$
L^{2}(\M_\alpha  ):=\{u:\RE_{+}\times\ci\to\CO:\smallint_{0}^{2\pi}\smallint_{0}^{+\infty}|u(x,\theta)|^{2}\,x^{-\alpha}dxd\theta <+\infty\}\,.
$$
In \cite{BP} it is shown that the minimal realization 
$$\Delta_{\alpha}^{\min}:C^{\infty}_{c}(\M_\alpha  )\subset L^{2}(\M_\alpha  )\to 
L^{2}(\M_\alpha  ) 
$$
of the Laplace-Beltrami operator  
\be\label{LB}
\Delta_{\alpha}:=\frac{\partial^{2}}{\partial x^{2}}-\frac{\alpha}{x}\,\frac{\partial\ }{\partial x}+x^{2\alpha}\,
\frac{\partial^{2}}{\partial \theta^{2}}
\ee
corresponding to $g_{\alpha}$
is essentially self-adjoint whenever $\alpha\notin(-3,1)$, has deficiency indices $(1,1)$ whenever $\alpha\in(-3,-1]$ and has infinite deficiency indices whenever $\alpha\in(-1,1)$. Therefore, in order to determine and then study all self-adjoint realizations of $\Delta^{\min}_{\alpha}$, $-1<\alpha<1$, by Theorem \ref{estensioni} one needs to characterize the range space of the trace map 
$$
\gamma _{\alpha}u(\theta):=\lim_{x\downarrow 0}\, x^{-\alpha}\,\frac{\partial u}{\partial x}(x,\theta)
$$
acting on function in the domain of the Friedrichs extensions $\Delta^{D}_{\alpha}$ 
(corresponding to Dirichlet boundary conditions at  $\ci$) of $\Delta^{\min}_{\alpha}$ (see \cite{PP}).  Let us sketch here a proof in the case $0<\alpha<1$, referring to \cite{PP} for more details and for the (more involved but still using Theorem \ref{lemma1}) proof that holds in the case $-1<\alpha<1$. \par
By partial Fourier transform one gets 
$$
L^{2}(\M_\alpha  )=\op L^{2}_{w}(\RE_{+})\,,\quad \Delta^{D}_{\alpha}=\op (d^{2}_{\alpha}-k^{2}q_{\alpha})\,,
$$
where $L^{2}_{w}(\RE_{+})$ is the weighted $L^{2}$ space
$$
L^{2}_{w}(\RE_{+}):=\{f:\RE_{+}\to\CO:\int_{0}^{\infty}|f(x)|^{2} x^{-\alpha}dx<+\infty\}\,,
$$
and
$$
(d^{2}_{\alpha}-k^{2}q_{\alpha}):\D_{\alpha,k}\subset L_{w}^{2}(\RE_{+})\to L_{w}^{2}(\RE_{+})
\,,$$
$$
d^{2}_{\alpha}f(x):=f''(x)-\frac{\alpha}{x}\,f'(x)\,,\quad q_{\alpha}(x)=x^{2\alpha}\,,
$$
$$
\D_{\alpha,k}:=\{f\in L_{w}^{2}(\RE_{+})\cap C^{1}(\overline\RE_{+}):  (d^{2}_{\alpha}-k^{2}q_{\alpha})\in L_{w}^{2}(\RE_{+})\,,\ f(0)=0\}\,.
$$
By Remark \ref{finite} we can suppose $k\not=0$ and, since  $0\in\cap_{k\in\Z\backslash\{0\}}\,\rho(A_{k})$, $A_{k}=d^{2}_{\alpha}-k^{2}q_{\alpha}$, whenever $0<\alpha<1$, we can use the results provided in Remark \ref{variant} with $\lambda=0$. Since $f_{\xi}\equiv G_{k}\xi$ solves the boundary value problem
$$
\begin{cases} 
f_{\xi}''(x)-\frac{\alpha}{x}\,f_{\xi}'(x)-k^{2}x^{2\alpha}f_{\xi}=0\\
f_{\xi}(0)=\xi\,,
\end{cases}
$$
one gets $$(G_{k}\xi)(x)=\xi \exp\left(- \frac{|k|x^{\alpha+1}}{\alpha+1}\right)\,.
$$
Therefore $G_{k}^{*}G_{k}:\CO\to\CO$ is given by the multliplication by the real number
$$
G_{k}^{*}G_{k}\equiv\int_{0}^{\infty} e^{- 2\,\frac{|k|x^{\alpha+1}}{\alpha+1}} x^{-\alpha}
dx=|k|^{\frac{\alpha-1}{\alpha+1}}\int_{0}^{\infty} e^{- 2\,\frac{x^{\alpha+1}}{\alpha+1}} x^{-\alpha}
dx
$$
and so $\h_{(k)}=\CO$ is equipped with the scalar product
$$
[\xi,\zeta]_{(k)}:=|k|^{\frac{1-\alpha}{1+\alpha}}\,\xi\cdot\zeta\,.
$$
Thus by Theorem  \ref{lemma1}  the range space of $\gamma _{\alpha}$ (i.e. the defect space of $\Delta^{\min}_{\alpha}$) is given by the fractional Hilbert-Sobolev space 
$$
H^s(\ci)\simeq h^s(\Z):=
\left\{\{s_{k}\}_{k\in\Z}\in\ell^{2}(\Z):
\sum_{k\in\Z}|k|^{2s}\,|s_{k}|^{2}<+\infty\right\}\,,
$$
where
$s=\frac12-\frac{\alpha}{1+\alpha}$.

\end{section}

\vskip20pt\p
{\bf Acknoledgements.} I thank Ugo Boscain and Dario Prandi for the stimulating discussions which inspired this work.

\end{document}